\documentclass{article}

\def\paperversionpreprint{preprint}

\def\paperversion{preprint} %

\usepackage{amsmath}
\usepackage{amssymb}
\usepackage{graphicx}
\usepackage{amsthm}
\usepackage{xspace}
\usepackage{booktabs}

\PassOptionsToPackage{numbers, sort&compress}{natbib}

\usepackage[\paperversion]{neurips_2026}

\usepackage[utf8]{inputenc} %
\usepackage[T1]{fontenc}    %
\usepackage{hyperref}       %
\usepackage{url}            %
\usepackage{booktabs}       %
\usepackage{amsfonts}       %
\usepackage{nicefrac}       %
\usepackage{microtype}      %
\usepackage{xcolor}         %

\usepackage{cleveref}

\title{PolarNet: Single-Minima Neural Network for Modeling Lyapunov Functions}

\author{%
  Yuan Zhong \\
  Faculty of Science and Technology \\
  University of Macau\\
  \texttt{mc45120@um.edu.mo} \\
  \And
  Jiaxin Cheng \\
  Faculty of Science and Technology \\
  University of Macau\\
  \texttt{yc47434@um.edu.mo} \\
  \AND
  Hefu Ye \\
  Faculty of Science and Technology \\
  University of Macau \\
  \texttt{hefuye@um.edu.mo} \\
  \And
  Yicong Zhou \\
  Faculty of Science and Technology \\
  University of Macau\\
  \texttt{yicongzhou@um.edu.mo} \\
}

\Crefname{figure}{Fig.}{Figs.}
\crefformat{equation}{#2(#1)#3}

\newcommand{\Real}{\mathbb{R}}

\newcommand{\netu}{u_\theta}
\newcommand{\netV}{V_\theta}
\newcommand{\dotnetV}{\dot{V}_{\theta}}

\newcommand{\dfdt}[1]{\frac{\mathrm{d}#1}{\mathrm{d}t}}

\newcommand{\conddecnet}{\dotnetV(x) < 0,\forall x\neq 0}

\newcommand{\Vtar}{V}

\newcommand{\statespace}{\mathcal{X}}
\newcommand{\uspace}{\mathcal{U}}
\newcommand{\neighbourhood}{\mathcal{U}}

\newcommand{\fnwarp}{\Psi}
\newcommand{\fnarch}[1]{\|\fnwarp(#1)\|^2}

\newcommand{\cpflow}{h_{cf}}

\newcommand{\properV}{single pole function\xspace}
\newcommand{\ProperV}{Single pole function\xspace}

\newcommand{\properVs}{single pole functions\xspace}

\newcommand{\ourarch}{PolarNet\xspace}

\newcommand{\continuous}[1]{\mathcal{C}^{#1}}

\newcommand{\cpfxl}{y_{\le k}}
\newcommand{\cpfxh}{y_{> k}}

\begin{document}

\maketitle

\begin{abstract}
Learning control strategies with provable stability guarantees continues to be a challenging problem. In this work, we examine a family of training-time behaviors exhibited by existing neural Lyapunov control methods under specific conditions, which can hinder the synthesis of a provably stable controller. We identify the root cause as the lack of neural network architectural guarantees on the learned Lyapunov function, and propose PolarNet, a network architecture that provably addresses these issues by structurally guarantee to have a single critical point. We provide theoretical guarantee regarding the properness and universality of PolarNet for modeling Lyapunov functions, and show that using it as a drop-in replacement in existing neural Lyapunov control methods can effectively circumvent particular difficulties in training. We conduct a set of numerical experiments to verify that PolarNet consistently maintains a single critical point and, when used as a drop-in replacement in existing neural Lyapunov control methods, successfully avoids training failures caused by the lack of architectural guarantees.
\ifx\paperversion\paperversionpreprint
The code of this paper is available at \url{https://github.com/23-zy/PolarNet}.
\fi
\end{abstract}

\section{Introduction}

Lyapunov functions are a cornerstone of modern control theory, serving as the primary tool for analyzing the stability of dynamic systems modeled in state space. Despite their near-dominant role, the search for a Lyapunov function has, for over a century, relied heavily on the physical insights and mathematical ingenuity of domain experts. Constructing such functions remains a longstanding open problem in both control theory and mathematics \cite{NEURIPS2024_aa280e73}. Although various analytical methods have been developed within the control community to assist in constructing Lyapunov functions, such as solving the algebraic Riccati equation \cite{Lewis2012} and sum-of-squares programming \cite{2002SOSLyapunov}, none of these approaches are universally applicable. Even for relatively simple systems, these methods may fail to produce a valid Lyapunov function or may incorrectly conclude that no such function exists.
Recent research (e.g., \cite{NLC,Wei2023NeuralLC,rnc-nyu-2025,NEURIPS2022_bba3160c,NEURIPS2023_disc_NLC,mehrjou2020neural}) has explored an intriguing learning-based alternative that automatically synthesizes provably stable controllers by jointly learning a control strategy and a Lyapunov function, with the latter serving as the stability certificate. Specifically, by embedding the Lyapunov conditions into the learning objective, these methods, collectively known as neural Lyapunov control, are applicable to complex uncertain nonlinear systems without sacrificing a theoretical stability guarantee. This line of work has achieved remarkable empirical success on a wide range of challenging control tasks, demonstrating a promising direction beyond conventional analytical approaches.

However, the reliance on flexible neural networks to learn Lyapunov functions introduces a new challenge that is absent in traditional analytical methods. We identify a family of training-time behaviors that can arise under specific conditions in existing neural Lyapunov control methods, that can potentially prevent the learning process from properly converging and thus blocking the synthesis of a provably stable controller. We trace this issue to the absence of a suitable structural guarantee in the neural networks used to learn Lyapunov functions. Specifically, we show that for a certain class of general nonlinear systems, any valid Lyapunov function must possess exactly one critical point, yet the network architectures commonly adopted in existing methods do not enforce this property structurally.  This issue is specific to learning-based paradigms and has therefore been overlooked in the recent shift from analytical methods to learning-based controller synthesis.

To fill this gap, we propose \ourarch, a neural network architecture that provides a rigorous structural guarantee by exploiting an interesting mathematical fact. A key theoretical result is that \ourarch always possesses exactly one critical point, thereby circumventing the training difficulty described above. Furthermore, we show that under a mild condition, \ourarch is capable of constructing Lyapunov functions for any asymptotically stable equilibrium of general dynamic systems. We validate the effectiveness of \ourarch through numerical experiments on both synthetic function approximation tasks and a challenging nonlinear control problem. The results demonstrate that \ourarch consistently maintains a single critical point and, when used as a drop-in replacement in existing neural Lyapunov control methods, successfully avoids training failures caused by the lack of architectural guarantees. To the best of our knowledge, \ourarch is the first architecture that structurally enforces the single-critical-point property for Lyapunov function learning, offering a principled and practical solution to a previously overlooked failure mode in neural Lyapunov control.

\section{Preliminary}

\theoremstyle{definition}
\newtheorem{theorem}{Theorem}
\newtheorem{lemma}{Lemma}
\newtheorem{corollary}{Corollary}
\newtheorem{condition}{Condition}
\newtheorem{definition}{Definition}
\newtheorem{remark}{Remark}

\renewenvironment{proof}[1][\proofname]{\par
  \pushQED{\qed}%
  \normalfont \topsep 0pt
  \trivlist
  \item[\hskip\labelsep\itshape#1{.}]\ignorespaces}{
  \popQED\endtrivlist
}

\subsection{Problem formulation}

Consider the following general dynamic system, which is an ordinary differential equation defined as:
\begin{equation}
    \label{eq:dynsys}
    \dot{x}=f(x,u),\qquad x(t_0)=x_0,
\end{equation}
where $x\in\statespace \subset  \Real^m$ is the system state,  $u\in\uspace \subset \Real^n$ is the control input, $\dot{x}$ denotes the derivative of $x$ w.r.t. $t$, and $f:\Real^m\times\Real^n \to \Real^m$ is an unknown nonlinear function that is locally Lipschitz continuous in $\statespace$. Suppose the system \eqref{eq:dynsys} is controllable, and $x_e\in \statespace$ is an equilibrium point of \eqref{eq:dynsys}, that is, $f(x_e,0)=0$. The goal of the task is to synthesize an appropriate input $u(x)$, such that the closed-loop  system  is asymptotically stable in the sense of \Cref{def:asymp_stable}.
\begin{definition}[Asymptotic stability \cite{khalil2002nonlinear}]
    \label{def:asymp_stable}
        Let $x_e$ be an equilibrium point of the closed-loop autonomous system  $\dot{x}=f(x,u(x))$.
        Then $x_e$ is said to be asymptotically stable if
        there exists $\delta > 0$ such that for every initial condition satisfying
        $\|x_0 - x_e\| < \delta$, we have $\lim_{t \to \infty} x(t) = x_e$.
\end{definition}
Lyapunov functions $\Vtar(x):\Real^m\to\Real$ are special functions whose existence imply that a dynamic system is asymptotically stable. Formally:
\begin{definition}[Lyapunov stability criteria \cite{khalil2002nonlinear}]
    \label{def:lyapunov_fn}
        Without loss of generality, let \( x = 0 \) be an equilibrium point of $\dot{x}=f(x,u(x))$. Suppose \( \Vtar : \mathbb{R}^m \to \mathbb{R} \) is a  radially unbounded $\mathcal{C}^1$ function satisfying: $i)~
        \Vtar(0) = 0 $,  $ii)~ \Vtar(x) > 0,  \forall \, x \neq 0$,  and  $
        \dot{\Vtar}(x) < 0,   \forall \, x \neq 0$,  	then \( x = 0 \) is  asymptotically stable.  In this case, $\Vtar(x)$ is called a  Lyapunov function of this system.
\end{definition}
Throughout this work, we consider the case where $x_e=0$ is the only desired stable equilibrium.

Recent research (e.g., \cite{NLC,Wei2023NeuralLC,rnc-nyu-2025}) has found that, by directly learning to satisfy the Lyapunov stability criteria, it is possible to synthesize a controller $u$ that stabilizes a general dynamic system. The general idea is to parameterize the controller and a corresponding Lyapunov function with two neural networks $\netu$ and $\netV$, and to train them simultaneously by minimizing the empirical risk of violating the Lyapunov stability criteria. This empirical risk \cite{NLC} is defined as:
\begin{equation}
\label{eq:lyapunov_risk_canon}
    \frac{1}{N}\sum_{i=1}^{N}\left(
        \max(0,-\netV(x_i))
        + \max(0,\dotnetV(x_i))
        + \netV^2(0)
    \right),
\end{equation}
where $\{x_i\in\statespace|i\in\{1,2,\dots,N\}\}$ is a set of points in $\statespace$, called the training set, and each $x_i$ is an $m$-dimensional state vector sampled from $\statespace$.
\subsection{Limitation of existing methods}
\label{sec:limit_of_exiting_methods}

For a proper Lyapunov function $\netV$ that satisfies $\conddecnet$,  its derivative along the system trajectory is given by the chain rule:
\begin{equation}
    {\dotnetV(x)} =\langle\nabla{\netV(x)}, \dot{x}\rangle, \qquad \forall x \in \statespace,
\end{equation}
where $\langle\cdot,\cdot\rangle$ denotes the inner product. If there exists a point \(x^* \neq 0\) such that $\langle\nabla{\netV(x)},x^*\rangle = 0$, then we would have \(\dotnetV(x^*) = 0\), contradicting \(\dotnetV(x) < 0\) for all \(x \neq 0\). Therefore, such a point cannot exist. This gives the following necessary condition for valid Lyapunov function $\netV$:
\begin{equation}
    \label{eq_no_nz_crit_point}
    \nabla{\netV(x)} \neq 0,\qquad\forall x\neq 0.
\end{equation}
Recent works of neural Lyapunov control (e.g., \cite{rnc-nyu-2025,lnet}) typically model Lyapunov functions with specialized networks $\netV(x): \Real^m \to \Real$ that are always positive definite, and pursue the $\conddecnet$ condition by minimizing its empirical risk function as defined in \cref{eq:lyapunov_risk_canon}. With such network that provides structural guarantee of $\netV(0)=0$ and $\netV(x)>0,\forall x\neq 0$, \cref{eq:lyapunov_risk_canon} can be reduced to:
\begin{equation}
\label{eq:lyapunov_risk_reduced}
    \frac{1}{N}\sum_{i=1}^{N}\max\left(0,\dotnetV(x_i)\right).
\end{equation}
While these methods have demonstrated the ability to synthesize controllers that can be effective in practice, their network architectures do not inherently guarantee that \cref{eq_no_nz_crit_point} stays true. As a result, training may encounter difficulties in certain scenarios. Consider, for example, a setup where we train $\netu$ and $\netV$ by minimizing the empirical risk in \cref{eq:lyapunov_risk_reduced} over  uniformly sampled $x_i \in \statespace$ with scalar state and input signals. Under this training strategy, the situation depicted in \Cref{fig:fail_eg_for_explain} may occur: at some point during training, $\netV$ learned a shape with multiple local minima (see \Cref{fig:fail_eg_for_explain}(a)). \Cref{fig:fail_eg_for_explain}(b) shows the partial derivative
$\partial\netV/\partial x$ of this learned function, where the existence of points with zero derivative (other than the origin) clearly violates the necessary condition (4). In addition, since \cref{eq:lyapunov_risk_reduced} is computed with the current $\netV$ and $\netu$ that are still being trained, this loss term is essentially encouraging $\netu$ to steer the states along the descending directions of $\netV$, regardless of whether $\netV$ itself is a valid Lyapunov function or not. Consequently, the closed-loop dynamics determined by $f$ and $\netu$  may take a shape similar to that shown in \Cref{fig:fail_eg_for_explain}(c). In this case, $\dotnetV<0$ is not always true, and $\dotnetV=0$ occurs at the origin, $P$ and $Q$ (see \Cref{fig:fail_eg_for_explain}(d)). However, the empirical risk in \cref{eq:lyapunov_risk_reduced} remains zero, even if the Lyapunov stability criteria is clearly violated at points $P$ and $Q$. Furthermore, training of $\netV$ and $\netu$ is performed on a finite number of samples $\{x_i\}$ in practice. So even when the roots of $\dot{x}=0$ do not perfectly coincide with $P$ and $Q$, the empirical risk can still vanish to zero at this stage, providing little useful gradient information for continued learning. Consequently, $\netu$ may fail to eliminate the spurious equilibria (e.g., $P$ and $Q$), and more importantly, the training may stall at these points, offering no guarantee that the learned controller achieves asymptotic stability of the desired equilibrium $x=0$.

\begin{figure}[!ht]
    \centering
    \includegraphics[width=1\linewidth]{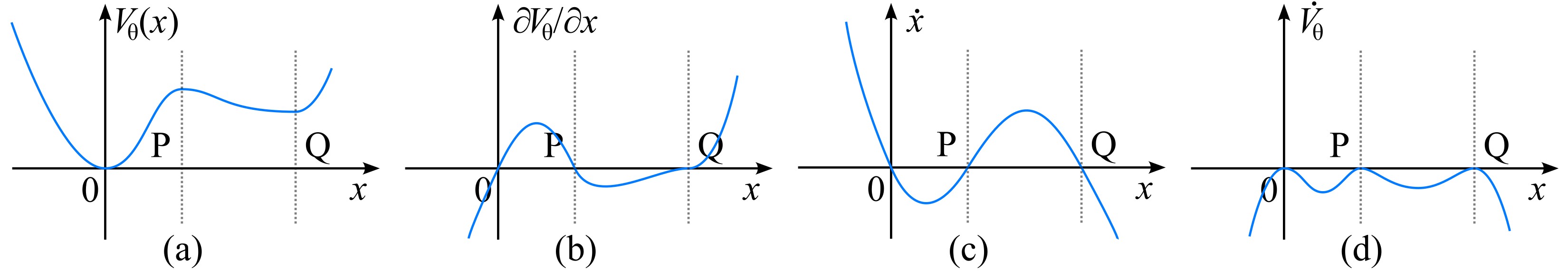}
    \caption{A possible scenario that may arise when using existing neural network architectures (e.g., \cite{NLC,Wei2023NeuralLC,rnc-nyu-2025,lnet}), shown as functions of the state $x$ along the horizontal axis. As a numerical example, consider a scalar system \(\dot{x}=u - x(x-P)(x-Q)\). The learned Lyapunov candidate function could take the form \(\netV(x)=x^2+\sin^2(\pi x)\), and $P,Q$ are the smallest two positive roots of equation $\partial{\netV}/\partial x=0$. If at \(x=P\) (or \(x=Q\)), the learned controller satisfies \(u_\theta = 0\), then $\dot{x}=0$ and hence the system trajectory will remain at \(x(t)=P\) (or \(x(t)=Q\)), and the true equilibrium \(x=0\) is not globally asymptotically stable, which hinders the successful synthesis of a provably stable controller. An experimental demonstration of such a scenario is provided in \Cref{sec:exp:controller_synthesis}.}
\label{fig:fail_eg_for_explain}
\end{figure}

Existing methods adopt various strategies to mitigate this issue. One common approach adds a small positive constant $\epsilon$ to the loss term in \cref{eq:lyapunov_risk_reduced}, making it non-zero unless $ \dotnetV < -\epsilon$, thereby expanding the region where the loss provides useful gradients \cite{Wei2023NeuralLC,rnc-nyu-2025}. Another approach actively seeks out points $x_i \neq 0$ where $\dot{V}_\theta(x_i) \ge 0$ via numerical optimization and re-construct the training set to include these points, allowing the model to focus on eliminating these violations \cite{NLC}. Similarly, a third strategy simulates closed-loop trajectories to assess the stability of initial states, constructing the training set around the current region of attraction (RoA) to maintain and gradually expand it during training \cite{Wei2023NeuralLC,mehrjou2020neural}. While these strategies may alleviate the issue during training, they do not completely eliminate it. We demonstrate this with an experiment in \Cref{sec:exp:controller_synthesis}. To the best of our knowledge, no existing method fundamentally resolves this issue by enforcing a network structural guarantee on the learned Lyapunov function itself.

It should also be noted that this issue is unique to learning-based methods that use flexible neural networks to model Lyapunov function, and is not present in analytic methods \cite{khalil2002nonlinear}.

\section{Method}

\subsection{Network architecture design}

To address the issue described in \Cref{sec:limit_of_exiting_methods}, we propose \ourarch, a network architecture that guarantees a single-critical-point shape in addition to positive-definiteness. By construction, this architecture strictly enforces \cref{eq_no_nz_crit_point} to be true, regardless of how its parameters are set. Formally speaking, \ourarch models a \properV, which we define as:
\begin{definition}[\ProperV]
\label{def:properV}
    A \properV is a function $\Vtar(x):\statespace\to\Real,\statespace\subset\Real^m,0\in\statespace$, such that it:

    $i)~$ is $\continuous{\infty}$ smooth $\forall x\in\statespace$;

    $ii)~$ is {positive definite}: $\Vtar(0)=0$; $\Vtar(x)>0,\forall x\neq 0$;

    $iii)~$ has exactly one critical point at $x=0$: $\nabla{\Vtar}\neq 0,\forall x\neq 0$;

    $iv)~$ is {proper} on $\statespace$, i.e., every sublevel set $\{x\in\statespace|V(x)\le c\}$ of $\Vtar$ is {compact}.\footnote{When $x\in\Real^m$ and $\Vtar$ is continuous, $\Vtar(x)$ being proper is equivalent to $\Vtar(x)$ being radially unbounded.}
\end{definition}
We additionally show in \Cref{sec:theoretical_guarantee} that, there exists a Lyapunov function, which is also a \properV, for the equilibrium of the general dynamic system \cref{eq:dynsys}.

We first reveal an interesting connection between a \properV and a smoothly invertible mapping $\fnwarp(x):\Real^m\to\Real^m$. Exploiting this connection, by integrating invertible neural networks (i.e., neural networks that are invertible by design, e.g., \cite{dinh2017density,NEURIPS2018_d139db6a,NIPS2016_ddeebdee,NIPS2017_6c1da886}) into the model architecture, we can naturally construct a neural network that guarantees to be a \properV.

We prove that, under a mild condition, any \properV can be expressed in the form $\Vtar=\fnarch{x}$, where $\fnwarp(x):\Real^m\to\Real^m$ is a $\continuous{\infty}$ smooth, invertible and zero-crossing function, and $\|\cdot\|$ is the standard Euclidean norm. Geometrically, this means $\Vtar$ can always be smoothly deformed to match an Euclidean ball $\|\cdot\|^2$, as illustrated in \Cref{fig:design_intuition}:

\begin{figure}[!ht]
    \centering
    \includegraphics[width=1\linewidth]{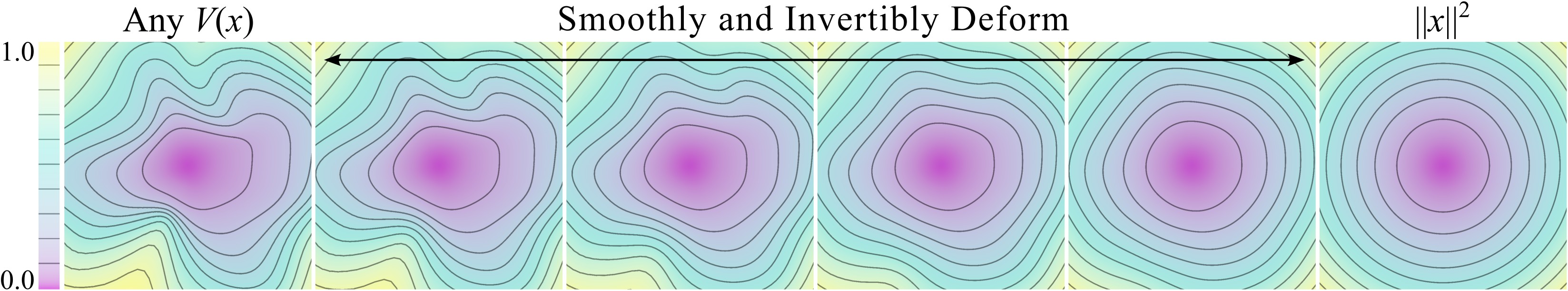}
    \caption{Graphical illustration of \Cref{theorem:all-can-be-repr}: The case of $\Vtar(x):\Real^m\to\Real, m=2$. As illustrated, a \properV $\Vtar(x)$ can be spatially distorted by a smoothly invertible mapping $\fnwarp(x):\Real^m\to\Real^m$, such that it matches an Euclidean ball $\|x\|$ after distortion. In other words, $\Vtar(x)$ can always be reformulated as the squared Euclidean norm $\|z\|^2$ in a smoothly distorted space $z=\fnwarp(x)$. For visual clarity, we visualize $\Real^2\to\Real$ functions as contour lines.}
\label{fig:design_intuition}
\end{figure}

\begin{theorem}
\label{theorem:all-can-be-repr}
    For all \properVs $\Vtar(x):\Real^m\to\Real$, if the origin is a non-degenerate critical point of $\Vtar$ (i.e., the Hessian matrix $D^2\Vtar(0)$ is invertible), then there exists a $\continuous{\infty}$ smooth, invertible and zero-crossing vector-valued function $\fnwarp(x):\Real^m\to\Real^m$, such that $\Vtar=\fnarch{x}$ holds throughout $\statespace$.
\end{theorem}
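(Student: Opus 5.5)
We construct $\fnwarp$ explicitly in ``polar'' form: we rescale each point along a direction field by the value of $\Vtar$. The plan has three pieces: use the non-degeneracy of the origin to handle a neighbourhood of $0$ (Morse lemma), use the gradient flow of $\Vtar$ away from the origin to identify every level set $\{\Vtar=c\}$ with a sphere, and then glue the two.

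\textbf{Step 1 (local model).} Since $\Vtar$ is $\continuous{\infty}$ with $\Vtar(0)=0$, $\nabla\Vtar(0)=0$, and $D^2\Vtar(0)$ invertible, the Morse lemma applies. Positive definiteness forces the index to be zero, so $D^2\Vtar(0)$ is positive definite. The Morse lemma then gives a local $\continuous{\infty}$ diffeomorphism $\phi$ near $0$ with $\phi(0)=0$ and $\Vtar(x)=\|\phi(x)\|^2$ on a neighbourhood $B$. Choose $c_0>0$ small so that the sublevel set $\{\Vtar\le c_0\}$ lies inside $B$. This is possible because $\Vtar$ is proper and positive definite, so small sublevel sets shrink to the origin. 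On this set, $\phi$ maps $\{\Vtar\le c_0\}$ onto the closed ball of radius $\sqrt{c_0}$, and $\Sigma:=\{\Vtar=c_0\}$ is identified with a sphere via $\phi$.

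\textbf{Step 2 (global extension by normalized gradient flow).} On $\Real^m\setminus\{0\}$ consider the vector field $W=\nabla\Vtar/\|\nabla\Vtar\|^2$. It is smooth because of condition $iii)$, and along its flow $\frac{\mathrm d}{\mathrm dt}\Vtar=1$. Every flow line starting on $\Sigma$ exists for all times $t\in(-c_0,\infty)$. Forward completeness is the delicate part: properness (condition $iv)$) keeps trajectories inside the compact shells $\{c_0\le\Vtar\le c\}$, where $\|\nabla\Vtar\|$ is bounded below, so the flow cannot escape in finite time. Backward, the flow stays in $\{\Vtar\le c_0\}$, where the Morse chart lets us compare it with the explicit model.

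The flow $\Phi$ gives a diffeomorphism $\Sigma\times(0,\infty)\to\Real^m\setminus\{0\}$, $(s,c)\mapsto\Phi_{c-c_0}(s)$, with $\Vtar$ equal to $c$ on the image. For surjectivity, flow any point $x\neq0$ backward until it hits $\Sigma$; this is possible because $\Vtar$ is strictly monotone along flow lines and $\{\Vtar\ge c_0\}$ is reached. Writing $\sigma:\Sigma\to S^{m-1}$ for the sphere identification from Step 1, define
\begin{equation*}
\fnwarp(x)=\sqrt{\Vtar(x)}\,\sigma(s(x)),\qquad \fnwarp(0)=0,
\end{equation*}
where $s(x)\in\Sigma$ is the point where the flow line through $x$ meets $\Sigma$. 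By construction $\fnarch{x}=\Vtar(x)$, and $\fnwarp$ is a bijection of $\Real^m$ that is smooth away from $0$.

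\textbf{Step 3 (smoothness at the origin — the main obstacle).} The map just defined is smooth away from $0$, but near the origin $\sqrt{\Vtar}\,\sigma(s(x))$ need not coincide with the Morse chart $\phi$. The reason is that the gradient lines of $\Vtar$ in the $\phi$-coordinates are not radial unless the metric is adapted. My fix is to replace the Euclidean gradient by the gradient with respect to a Riemannian metric $g$ that equals the pullback $\phi^*(\text{Euclidean})$ on $\{\Vtar\le c_0\}$ and is glued to the Euclidean metric outside via a partition of unity. Gluing of metrics is always possible, and nondegeneracy of $d\Vtar$ is metric-independent, so condition $iii)$ still gives a nonvanishing field. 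In the $\phi$-chart, $\Vtar=\|y\|^2$ and the $g$-gradient is $2y$, so the flow lines are radial. Near $0$ this yields $\fnwarp=\phi$, which is smooth with invertible differential at the origin.

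Thus $\fnwarp$ is a $\continuous{\infty}$ bijection with everywhere invertible differential. On $\Real^m\setminus\{0\}$ the inverse is $(r,\omega)\mapsto\Phi_{r^2-c_0}(\sigma^{-1}(\omega))$, which is smooth, so $\fnwarp$ is a diffeomorphism, and it is zero-crossing since $\fnwarp(0)=0$. I expect the main technical care to lie in this step and in the completeness argument of Step 2.
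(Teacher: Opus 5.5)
Your proposal is correct and follows essentially the same route as the paper. The normalized gradient flow gives the product structure $\Real^m\setminus\{0\}\cong\Sigma\times(0,\infty)$, the Morse lemma identifies a small level set with a sphere, $\fnwarp$ is defined by the polar formula $\sqrt{\Vtar(x)}\,\sigma(s(x))$, and the flow is modified near the origin to be radial in Morse coordinates, which forces $\fnwarp=\phi$ there. The only real difference is the gluing device: you interpolate Riemannian metrics and use the normalized $g$-gradient, whereas the paper blends $\nabla\Vtar/\|\nabla\Vtar\|^2$ directly with the radial field $(D\varphi)^{-1}\varphi/(2\Vtar)$ through a cutoff. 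Your properness argument for choosing $c_0$ so that $\{\Vtar\le c_0\}$ lies in the Morse chart is, if anything, more careful than the paper's unjustified choice $a=\delta^2/4$.
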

\begin{proof}
\label{proof:all_eligible_fn_can_be_written_in_gnf}
    The idea of the proof is to directly construct a diffeomorphism $\fnwarp$ between the sublevel sets $\{x\in\statespace|V(x)\le c\}$ and standard spheres. This is done by smoothly identifying level sets $L_c=\{x\in\statespace|V(x)=c\}$ to each other with the flow \cite{Lee2012} of $\Vtar(x)$, thus constructing a diffeomorphism between the product structure $L_c\times(0,\infty)$ and $\statespace\setminus\{0\}$. Then we use the Morse lemma \cite{Milnor1963,Nirenberg2001} to obtain a diffeomorphism between a particular level set of $\Vtar(x)$ and a standard sphere. By composing these two diffeomorphism, we can construct an eligible $\fnwarp$ such that $\Vtar(x)=\fnarch{x}$. For a detailed and rigorous proof, see \Cref{apdx:proof_all_properv_can_be_written_in_gnf}.
\end{proof}
\begin{remark}
    While \Cref{theorem:all-can-be-repr} requires a non-degenerate critical point, this is a mild condition, as non-degenerancy is a common situation, thus any degenerate critical point can be perturbed by an arbitrarily small perturbation to become a non-degenerate critical point (see Chapter 1.7 of \cite{Guillemin2010}).
\end{remark}
We construct \ourarch $\netV$ exactly as $\netV(x) = \fnarch{x}$. We use the affine coupling flow architecture \cite{dinh2017density,NEURIPS2018_d139db6a,NIPS2016_ddeebdee,NIPS2017_6c1da886} to implement $\fnwarp$.
This architecture $\cpflow(x):\Real^m\to\Real^m$ is defined as:
\begin{subequations}
\label{eq:coupling_flow}
    \begin{align}
        h^{(l)}\begin{pmatrix}
            \cpfxl \\
            \cpfxh
        \end{pmatrix} &=
        \begin{pmatrix}
            \cpfxl \\
            \exp({f_s^{(l)}(\cpfxl)}) \odot \cpfxh + f_t^{(l)}(\cpfxl)
        \end{pmatrix},l\in\{1,2,\dots,L\},y=h^{(l-1)}, \\
        h^{(0)}(x) &= x, \\
        \cpflow(x)&=h^{(L)} \circ h^{(L-1)}\circ \space \cdots \space \circ h^{(1)}(x),
    \end{align}
\end{subequations}
where $k\in\{1,\dots,m-1\}$, $\odot$ is the Hadamard product (i.e., element-wise product), $\exp$ is applied in an element-wise manner, and $\circ$ is the function composition operator $a\circ b(y)=a(b(y))$, $a\circ b\circ c = a\circ(b\circ c)$. Both $f_s$ and $f_t$ are arbitrary neural networks. For each layer $h^{(l)}$, its input $y=(y_1,\cdots,y_m)^\intercal$ is partitioned into 2 parts $\cpfxl=(y_1,\cdots,y_k)^\intercal$ and $\cpfxh=(y_{k+1},\cdots,y_m)^\intercal$. After this layer, $\cpfxl$ is kept unchanged, and $\cpfxh$ is being transformed in an affine manner, with scale $\exp{f_s(\cpfxl)}$ and translation $f_t(\cpfxl)$ derived from $\cpfxl$. Both $f_s(\cdot)$ and $f_t(\cdot)$ are arbitrary neural networks. This architecture guarantees that, if $f_s(\cdot)$ is finite, the inverse of $h^{(l)}$ exists, and can be formulated as follows:
\begin{equation}
    h^{-1(l)}\begin{pmatrix}
        \cpfxl \\
        \cpfxh
    \end{pmatrix} =
    \begin{pmatrix}
        \cpfxl \\
        \\
        \dfrac{\cpfxh - f_t(\cpfxl)}{\exp{f_s(\cpfxl)}}
    \end{pmatrix}.
\end{equation}
In order to ensure that the entire network $\netV$ is $\continuous{\infty}$ smooth, we only use smooth activation functions (tanh) throughout the network. To achieve $\netV(0)=0$, we additionally ensure that $\fnwarp(0)=0$. Specifically, we construct the translation function $f_t$ in each affine layer $h^{(l)}$ using a multi-layer perceptron (MLP) \cite{Rumelhart1986} without bias (and using tanh activation, which is zero-crossing). This guarantees that $\exp{f_s(0)} \odot 0 + f_t(0)=0$. In other words, each layer is zero-crossing, so does their composition $\fnwarp$.

\subsection{Theoretical guarantee}
\label{sec:theoretical_guarantee}

\Cref{theorem:all-can-be-repr} has shown that all \properVs $\Vtar(x)$ can be formulated as $\Vtar(x)=\fnarch{x},\forall x\in\statespace\setminus\{0\}$. As has been proved in \cite{cpflow_universal_proof}, the affine coupling flow architecture is a universal approximator of $\continuous{2}$ diffeomorphisms in $\Real^m$. In other words, all $\continuous{2}$ differentiably invertible mappings in $\Real^m$ can be approximated by the affine coupling flow architecture to arbitrary precision. And by restricting all components of this architecture to be $\continuous{\infty}$, we can obtain a universal approximator for $\fnwarp$. Hence, a \properV can always be approximated by $\netV$. Now we also show that $\netV(x)=\fnarch{x}$ is always a \properV, regardless of how its parameters are set (as long as they are finite):
\begin{theorem}
\label{theorem:netv_is_always_a_properV}
    If $\fnwarp:\Real^m\to\Real^m$ is a $\continuous{\infty}$ smooth, zero-crossing invertible mapping, then $\fnarch{x}$ is a \properV.
\end{theorem}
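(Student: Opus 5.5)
The plan is to check the four conditions of \Cref{def:properV} one at a time for $\Vtar(x)=\fnarch{x}$, taking $\statespace=\Real^m$. Before starting, I will fix what ``invertible'' must mean. The hypothesis has to be read as: $\fnwarp$ is a $\continuous{\infty}$ diffeomorphism of $\Real^m$ onto $\Real^m$, i.e.\ a bijection with a smooth inverse, so that the Jacobian $D\fnwarp(x)$ is nonsingular everywhere. A merely bijective smooth map such as $x\mapsto x^3$ would make condition $iii)$ fail. This reading is exactly what the affine coupling architecture \cref{eq:coupling_flow} provides. Each layer $h^{(l)}$ is a bijection of $\Real^m$ with the explicit smooth inverse $h^{-1(l)}$. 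Its Jacobian is block lower triangular, with diagonal blocks $I_k$ and $\mathrm{diag}(\exp(f_s^{(l)}(\cpfxl)))$, so its determinant is $\exp(\sum_j f^{(l)}_{s,j}(\cpfxl))>0$. I will state this interpretation at the start of the proof.

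\emph{Smoothness and positive definiteness.} Condition $i)$ is immediate: $\Vtar$ is the composition of the smooth map $\fnwarp$ with the polynomial $z\mapsto\|z\|^2$. For $ii)$, zero-crossing gives $\Vtar(0)=\|\fnwarp(0)\|^2=0$. For $x\neq 0$, injectivity of $\fnwarp$ together with $\fnwarp(0)=0$ gives $\fnwarp(x)\neq 0$, hence $\Vtar(x)>0$.

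\emph{Single critical point.} By the chain rule,
\begin{equation*}
\nabla\Vtar(x)=2\,D\fnwarp(x)^\intercal\,\fnwarp(x).
\end{equation*}
For $x\neq 0$ the vector $\fnwarp(x)$ is nonzero, as just shown, and $D\fnwarp(x)^\intercal$ is nonsingular, so $\nabla\Vtar(x)\neq 0$. At the origin the gradient vanishes because $\fnwarp(0)=0$. Hence $x=0$ is the unique critical point, which is condition $iii)$. This step is the one I expect to be the main obstacle, because it is the only place where nonsingularity of $D\fnwarp$, and not just bijectivity, is really needed. I would handle it by invoking the diffeomorphism reading above, for coupling flows via the explicit positive Jacobian determinant.

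\emph{Properness.} For $c\ge 0$ the sublevel set is
\begin{equation*}
\{x:\Vtar(x)\le c\}=\fnwarp^{-1}\bigl(\bar B_{\sqrt c}\bigr),
\end{equation*}
where $\bar B_{\sqrt c}$ is the closed Euclidean ball of radius $\sqrt c$. Since $\fnwarp$ is a bijection onto $\Real^m$ with continuous inverse, this set equals the image of the compact set $\bar B_{\sqrt c}$ under the continuous map $\fnwarp^{-1}$, and is therefore compact. For $c<0$ the sublevel set is empty. This gives condition $iv)$, and by the footnote to \Cref{def:properV} it is equivalent to radial unboundedness. Together the four checks show that $\fnarch{x}$ is a \properV.
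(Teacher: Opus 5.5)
Your proof is correct and follows the same four-step check as the paper: composition for smoothness, injectivity with $\Psi(0)=0$ for positive definiteness, the chain rule $\nabla V = 2\,D\Psi^\intercal \Psi$ for the single critical point, and compactness of $\Psi^{-1}$ of a closed ball for properness; your explicit reading of ``invertible'' as a diffeomorphism onto $\mathbb{R}^m$ also repairs the paper's unjustified step ``since $\Psi$ is invertible, it has no critical point.'' One correction to your motivating remark: $x\mapsto x^3$ does not break the single-critical-point condition, because $V(x)=x^6$ has its only critical point at the origin, whereas $x\mapsto (x-1)^3+1$ does break it, since it is a zero-crossing smooth bijection whose derivative vanishes at $x=1\neq 0$.
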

\begin{proof}
    We prove for each property:

    $i)~$ Since both $\|\cdot\|^2$ and $\fnwarp$ are $\continuous{\infty}$ smooth, their composition $\fnarch{\cdot}$ is also $\continuous{\infty}$ smooth.

    $ii)~$ Invertibility of $\fnwarp$ implies that $\fnwarp(x)\neq\fnwarp(0),\forall x\neq0$, hence $\fnwarp(x)\neq\fnwarp(0)=0,\forall x\neq 0$. By definition of the squared norm $\|\cdot\|^2$, we have $\fnarch{x}\neq0,\forall x\neq0$. Since $\|\cdot\|^2\ge0$, $\fnarch{x}$ is positive definite.

    $iii)~$ Since $\fnwarp$ is invertible, it has no critical point. $\|\cdot\|^2$ has exactly one critical point at 0, $\fnwarp(x)=0$ only at $x=0$. By the chain rule of derivative computation, $\fnarch{x}$ has exactly one critical point at $x=0$.

    $iv)~$ $N(y)=\|y\|^2$ is proper on $\Real^m$ (with codomain $[0,+\infty)$), hence for any compact set $K\subset[0,+\infty)$, its preimage $N^{-1}(K)$ is also compact. Invertibility of $\fnwarp$ implies continuity of $\fnwarp^{-1}$, and the continuous image of any compact set is also a compact set. Thus $\fnwarp^{-1}(N^{-1}(K))$ is also compact. Hence $\fnarch{x}$ is proper on $\Real^m\supset\statespace$.

    We have thus shown that $\fnarch{x}$ is always a \properV.
\end{proof}
We also show that, for all asymptotically stable equilibrium of a general dynamic system \cref{eq:dynsys}, there exists a \properV $\Vtar$, such that $\Vtar$ is a Lyapunov function of this equilibrium:
\begin{theorem}
\label{theorem:gas_lip_sys_always_has_properV}
    Let $x=0$ be an asymptotically stable equilibrium of a dynamic system $\dot{x}=f(x,u(x))$, where $f(x,u(x))$ is locally Lipschitz. Let $\mathcal{R}_A\in\statespace$ be the region of attraction of $x=0$. Then there exists a Lyapunov function $\Vtar(x)$ for this equilibrium, that is also a \properV.
\end{theorem}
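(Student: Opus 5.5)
The plan is to invoke the converse Lyapunov theorem for asymptotically stable equilibria with a possibly bounded region of attraction: Massera's theorem in the smooth form of Kurzweil, Wilson, or Khalil, Thm.~4.17 \cite{khalil2002nonlinear}. Under local Lipschitz continuity of the closed-loop field $F(x)=f(x,u(x))$, this gives a $\continuous{\infty}$ function $W:\mathcal{R}_A\to\Real$ with $W(0)=0$, $W(x)>0$ for $x\neq0$, $W(x)\to\infty$ as $x\to\partial\mathcal{R}_A$ or $\|x\|\to\infty$, and $\dot W(x)=\langle\nabla W(x),F(x)\rangle<0$ for all $x\in\mathcal{R}_A\setminus\{0\}$. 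Smoothness of $W$ (rather than mere continuity) is exactly what the Kurzweil/Wilson refinement supplies; I take it as given. We set $\statespace=\mathcal{R}_A$, which is open and contains $0$, and check the four properties of \Cref{def:properV} for $\Vtar=W$, possibly after a reparametrization.

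\textbf{Properties i) and ii).} These hold directly from the converse theorem.

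\textbf{Property iv).} Because $W$ blows up at $\partial\mathcal{R}_A$ and at infinity, each sublevel set $\{x\in\mathcal{R}_A\mid W(x)\le c\}$ is closed in $\Real^m$ and bounded, hence compact. This is the proper-on-$\statespace$ requirement of \Cref{def:properV}.

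\textbf{Property iii).} This follows by the same argument as in \Cref{sec:limit_of_exiting_methods}. If $\nabla W(x^*)=0$ for some $x^*\neq0$ in $\mathcal{R}_A$, then $\dot W(x^*)=\langle\nabla W(x^*),F(x^*)\rangle=0$, which contradicts $\dot W<0$ away from the origin. Hence $0$ is the only critical point.

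\textbf{Radial unboundedness.} If the paper's Lyapunov definition (\Cref{def:lyapunov_fn}) is read as requiring radial unboundedness on $\Real^m$, this is handled by the footnote equivalence. On $\statespace=\mathcal{R}_A$, properness plays the role of radial unboundedness. When $\mathcal{R}_A=\Real^m$, the two notions coincide.

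\textbf{Main obstacle.} The main difficulty is the $\continuous{\infty}$ requirement together with strict decrease everywhere on $\mathcal{R}_A\setminus\{0\}$. The standard construction $W(x)=\int_0^\infty g(\|\phi_t(x)\|)\,dt$, with $\phi_t$ the flow, is only as smooth as the flow. A merely Lipschitz field gives a flow that is only Lipschitz in $x$.

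To fix this, first obtain a continuous proper Lyapunov function $W_0$ on $\mathcal{R}_A$ with $\dot W_0\le -\alpha(\|x\|)$, understood as a Dini derivative. Then smooth $W_0$ by a locally finite partition-of-unity convolution on $\mathcal{R}_A\setminus\{0\}$, choosing the mollification radius small enough that the strict decrease survives: the error in $\dot W$ is controlled by the Lipschitz constant of $F$ times the mollifier width, and is made smaller than $\alpha/2$ locally.

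Smoothness at the origin needs separate care. Near $0$, glue in a local quadratic-type Lyapunov function $x^\top P x$ using a smooth cutoff and compose with a smooth increasing reparametrization $\rho$. The map $W\mapsto\rho\circ W$ preserves i)–iv) provided $\rho'>0$. This gluing step, rather than the critical-point property, is where I expect most of the technical work; alternatively one simply cites Wilson's smooth converse theorem to bypass it.
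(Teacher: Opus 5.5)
Your proposal follows the paper's proof essentially verbatim: cite the smooth converse Lyapunov theorem (Theorem 4.17 of \cite{khalil2002nonlinear}) on $\mathcal{R}_A$, set $\mathcal{X}=\mathcal{R}_A$, and read off smoothness, positive definiteness and compact sublevel sets; the single critical point then follows from $\dot W=\langle\nabla W,F\rangle<0$, exactly as in the paper's necessary-condition argument, so your proof is correct and rests on the same citation as the paper's. Your optional self-contained smoothing sketch should be dropped rather than relied on: gluing in a local $x^\top P x$ fails in general, because a quadratic Lyapunov function need not exist for an asymptotically but not exponentially stable equilibrium (e.g.\ $\dot x_1=-x_2^3$, $\dot x_2=x_1^3-\epsilon x_2^3$ with small $\epsilon>0$), and the mollification error estimate you describe requires $W_0$ to be locally Lipschitz, not merely continuous.
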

\begin{proof}
    By Kurzweil's converse Lyapunov theorem (\emph{Theorem 4.17} in \cite{khalil2002nonlinear}), and by the conclusion shown in \cref{eq_no_nz_crit_point}, for an asymptotically stable equilibrium $x=0$ of a dynamic system, there exists a Lyapunov function $\Vtar(x):\statespace\to\Real$ that is: $i)~ \continuous{\infty}$ smooth, $ii)~$ positive definite, $iii)~$ has exactly one critical point at $x=0$, and $iv)~$ satisfies $\forall c>0,\{x\in\statespace|\Vtar(x)\le c\}$ is compact, i.e., $\Vtar(x)$ is proper in $\mathcal{R}_A$. In other words, it is a \properV (By \Cref{def:properV}, letting $\statespace=\mathcal{R}_A$, one can conclude that $\Vtar$ is a \properV).
\end{proof}
\Cref{theorem:gas_lip_sys_always_has_properV} shows that for any controller synthesis task demanding an asymptotically stable equilibrium, there exists a Lyapunov function that can be accurately modeled by \ourarch. Combining with \Cref{theorem:netv_is_always_a_properV}, this provides a principled method to circumvent training difficulties caused by the lack of architectural guarantees in existing neural Lyapunov control methods.

\section{Experiments}

\subsection{Network architecture implementation}
\label{sec:exp}

In all experiments, we implement \ourarch with 4 affine coupling flow layers, (i.e., use the architecture described in \cref{eq:coupling_flow} with $L=4$), evenly partition the dimension of input vectors, and alternate the part being preserved in adjacent layers. In other words, we implement $\fnwarp$ as:
\begin{equation}
\label{eq:cpf_impl}
    \fnwarp(x) = \begin{pmatrix}
        \cpfxl \\
        g^{(4)}(\cpfxl, \cpfxh)
    \end{pmatrix} \circ \begin{pmatrix}
        g^{(3)}(\cpfxh, \cpfxl) \\
        \cpfxh
    \end{pmatrix} \circ \begin{pmatrix}
        \cpfxl \\
        g^{(2)}(\cpfxl, \cpfxh)
    \end{pmatrix} \circ \begin{pmatrix}
        g^{(1)}(\cpfxh, \cpfxl) \\
        \cpfxh
    \end{pmatrix},
\end{equation}
where $g^{(l)}(a, b) = \exp(f_s^{(l)}(a))\odot b + f_t^{(l)}(a)$ and $y^{0} = x$. All $f_s^{(l)}$ and $f_t^{(l)}$ are MLPs with two 12-dimensional hidden layers. And $f_t^{(l)}$ does not have a bias term. Each linear layer in these MLPs are followed by a tanh activation function, with the last layer (of each MLP) as an exception, which uses an identity activation.

\subsection{Studying properties of \ourarch}
\label{sec:exp:fn_fit}

In this experiment, we verify that \ourarch is always a \properV. Particularly, we show that it would never produce critical point other than 0, even when directly trained to fit a function with multiple critical points. Specifically, we manually design several scalar fields $V(x):\Real^2\to\Real, x\in(-1,+1)\times(-1,+1)$ and train a neural network $\netV$ to fit them by uniformly sample $x_i\in(-1,+1)\times(-1,+1)$, and minimize a mean-squared-error (MSE) loss function:
\begin{equation}
\label{eq:loss_mse}
    L=\frac{1}{N}\sum_{i=1}^{N}\left(\netV(x_i)-V(x_i)\right)^2.
\end{equation}
We compare \ourarch against several existing and representative network architectures for modeling Lyapunov functions. Namely:
$i)~$ MLP \cite{NLC,Rumelhart1986}: A multi-layer perceptron with three 64-dimensional hidden layers and tanh activation.
$ii)~$ Lyapunov Net \cite{lnet}: defined as $|h(x)-h(0)|+\gamma \|x\|^2$, where $h(x)$ is an MLP with three 64-dimensional hidden layers, $\gamma>0$ is a hyperparameter that is set to 1e-2, as in \cite{lnet}.
And $iii)~$ Wei et al. \cite{rnc-nyu-2025}: defined as $\frac{1}{2}\beta x^\intercal x+\frac{1}{2}\phi_\theta(x)^\intercal\phi_\theta(x)$, where $\phi_\theta(x)$ is an MLP with three 64-dimensional hidden layers, no bias term, and using tanh activation, $\beta$ is a hyperparameter that is set to 1e-6, as in \cite{rnc-nyu-2025}. The computation resource being used to conduct this experiment is described in \Cref{apdx:compres}.

We train the models for 10000 steps using the Adam optimizer \cite{Adam}, with $\beta_1=0.9,\beta_2=0.999,\epsilon=\text{1e-8}$, 5e-3 learning rate with 400 steps warm-up, and a batch size of 65536.
The results are presented in \Cref{fig:exp:fit_proper}. We found that, when $\Vtar$ is a \properV, all methods have comparable performance. However, when $\Vtar$ is not a \properV, existing architectures would still fit it without problem, thus producing multiple local minima. As a consequence, they may learn to exhibit multiple local minima when being used to model Lyapunov functions in controller synthesis tasks, thus hindering the successful synthesis of a globally stable controller. On the other hand, \ourarch remains a \properV in all cases. Hence it can always guarantee that \cref{eq_no_nz_crit_point} stays true throughout training process, avoiding this family of potential fail cases.

\begin{figure}[!ht]
    \centering
    \includegraphics[width=1\linewidth]{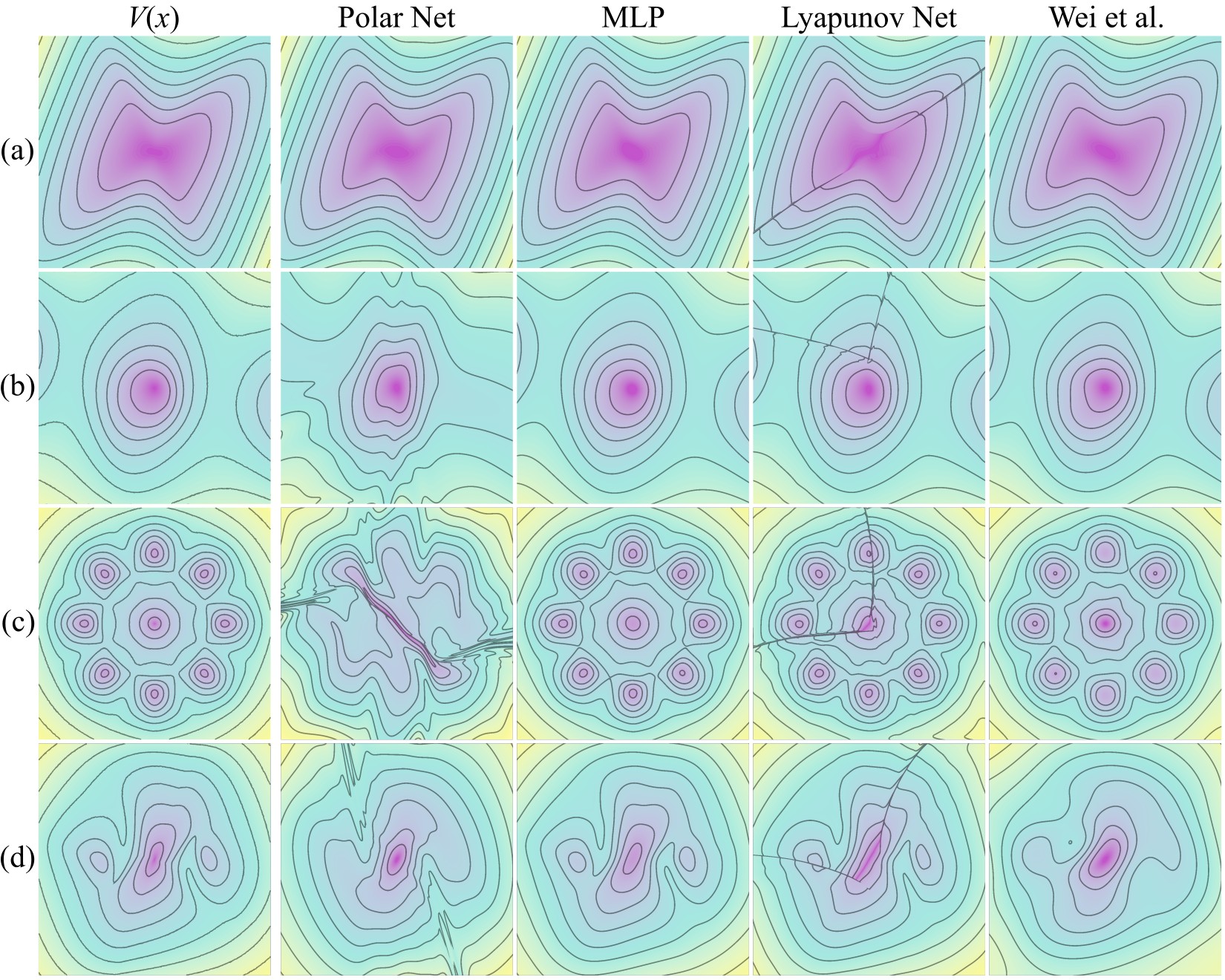}
    \caption{Results of fitting four different $\Vtar(x):\Real^2\to\Real$. In (a), $\Vtar$ is a \properV. In (b), (c) and (d), $\Vtar$ has multiple local minima and thus is not a \properV. Note that a network that guarantees to be a \properV should fail to faithfully represent $\Vtar$ in cases (b), (c), (d). We visualize $\Real^2\to\Real$ as contour lines for visual clarity, using the same scheme as in \Cref{fig:design_intuition}.}
\label{fig:exp:fit_proper}
\end{figure}

\subsection{Controller synthesis}
\label{sec:exp:controller_synthesis}

We verify the effectiveness of \ourarch in a controller synthesis task using numerical simulation. Consider the following nonlinear dynamic system:
\begin{subequations}
\label{eq:exp_sys}
    \begin{align}
        & \begin{cases}
            \dot{x}_1 = x_2 \\
            \dot{x}_2 = \dfrac{1}{2} \sin(\pi x_1) + (20\exp(-2x_1^2-\dfrac{1}{2}x_2^2)-10)u_1 + 50g(x_1)u_2 - \dfrac{1}{10}u_2
        \end{cases},\\
        g(x_1) = & \begin{cases}
            \dfrac{k}{2}\left(\dfrac{x_1}{k}\right)^2 \quad & \text{if } |x_1| < k \\
            |x_1|-\dfrac{k}{2} \quad & \text{otherwise}
        \end{cases},
    \end{align}
\end{subequations}
where $k=0.02$, $x=(x_1\;x_2)^\intercal,x\in\statespace=(-1,+1)\times(-1,+1)$ is the system state, $u=(u_1\;u_2)^\intercal\in\Real^2$ is the input. Note that $g(x_1)$ is $\continuous{1}$ and globally Lipschitz, because its two parts have their values and first order derivatives coincide at $|x_1|=k$, and the derivative is bounded between $[-1,+1]$. This system is also a control-affine system (i.e., it can be formulated as $\dot{x}=f_1(x)+f_2(x)u$, this matches the assumption made in \cite{Wei2023NeuralLC}). Also note that this system model is unknown to the controller.

We verify the effectiveness of \ourarch with a recent neural Lyapunov control method proposed in \cite{Wei2023NeuralLC}.
We first show that, when using an existing neural network architecture as in \cite{Wei2023NeuralLC}, this method could fail to stabilize the dynamic system \cref{eq:exp_sys}. And when we duplicate the same setting, except that we replace the network for Lyapunov function by \ourarch, we can successfully stabilize the dynamic system. We use a learning rate of 5e-6 for both the controller and Lyapunov function networks, and use a batch size of 32. Training is performed for 200 iterations (100 gradient update steps for each iteration). Further technical details used in this experiment are given in \Cref{apdx:tech_detail_of_exp}. The baseline architecture for comparison is implemented exactly as described in \cite{Wei2023NeuralLC}. The results are as shown in \Cref{fig:exp_fin_all}.

\begin{figure}[!ht]
    \centering
    \includegraphics[width=1\linewidth]{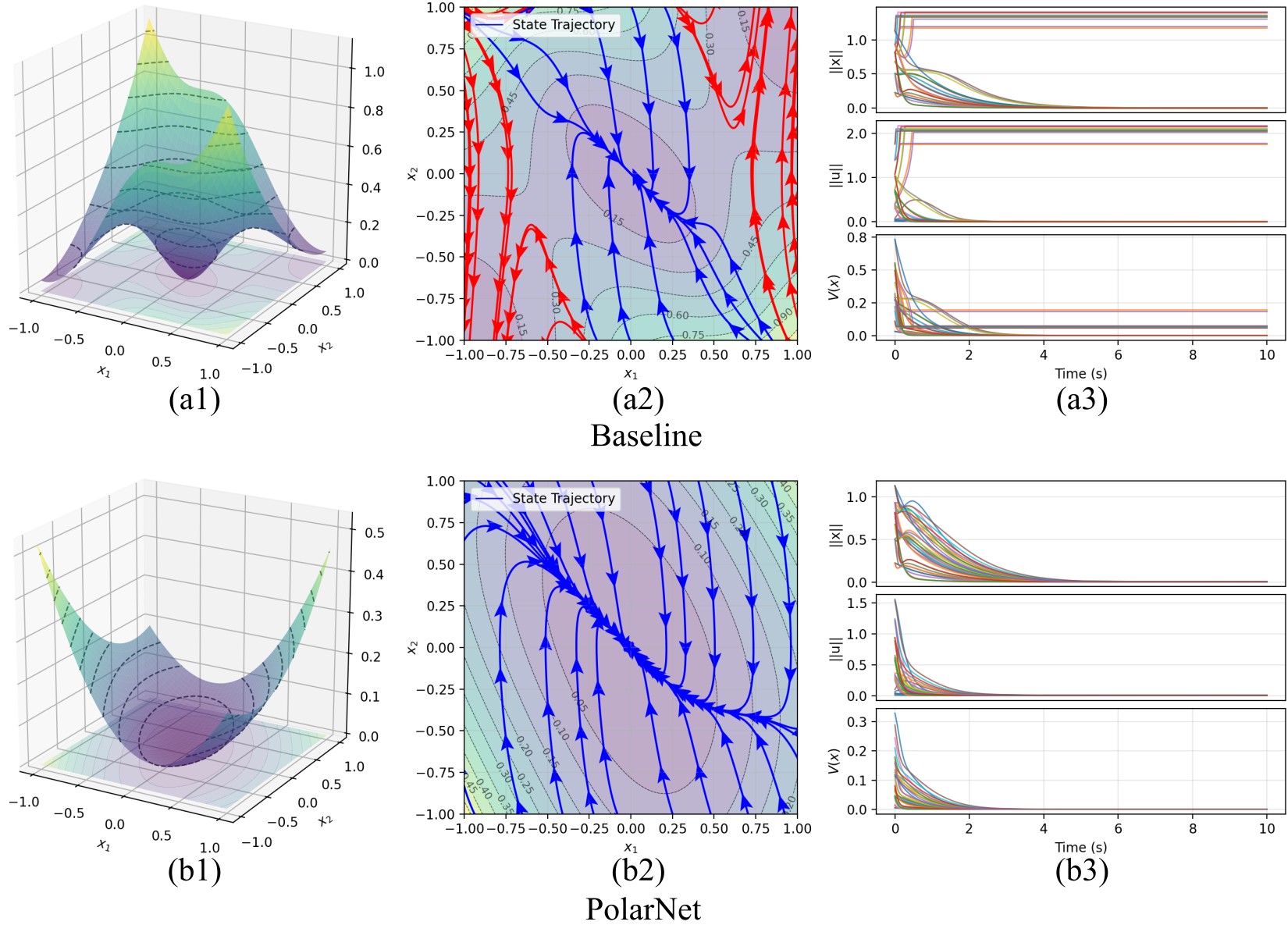}
    \caption{Controller synthesis result with different network architectures: (a1), (b1): Lyapunov function. (a2), (b2): Phase portrait, unstable trajectories are marked in red. (a3), (b3): 36 numerically simulated trajectories. (a): Using the network architecture as in \cite{Wei2023NeuralLC}. (b): Using \ourarch. $\Vtar(x)$ and $\|x\|$ trajectories are collected by uniformly sample $x_0\in[-0.8,+0.8]^2$, and if the state leaves the valid region $\statespace=(-1,+1)^2$ of the system dynamics, simulation is terminated for that trajectory. Notice how states evolve along descending directions of $\netV$ for $x\in(-1,+1)^2$ in (a) but do not converge to the origin, as analyzed in \Cref{sec:limit_of_exiting_methods}.}
    \label{fig:exp_fin_all}
\end{figure}

As shown in \Cref{fig:exp_fin_all}, in this experimental setup, existing neural network architecture, that does not guarantee \cref{eq_no_nz_crit_point}, may learn a shape with multiple local minima, causing controller synthesis to fail. When directly replacing the network with \ourarch, $\netV$ is guaranteed to have a unique minima at 0, and a controller that stabilize the dynamic system in the entire $\statespace$ can be successfully synthesized. We also conduct experiments with another method \cite{rnc-nyu-2025} and obtained similar results. Details can be found in \Cref{apdx:exp2025}.

It should be noted that this experiment is designed to verify both the existence of the issue described in \Cref{sec:limit_of_exiting_methods} and the effectiveness of \ourarch in circumventing it. Our method is at the principle level and thus applies universally to any system, whether synthetic or real world, provided the system satisfies the underlying mathematical assumptions.

\section{Limitations and future directions}
\label{sec:limitations}

The theoretical guarantee provided by our work assumes a locally Lipschitz dynamic system. When the system is not Lipschitz, \ourarch may still be applied, but does not guarantee in general that it can always represent a valid Lyapunov function for the system (i.e., \Cref{theorem:gas_lip_sys_always_has_properV} does not hold in this case). Future works may seek to generalize \ourarch to work with a broader continuum of dynamic systems.

Our method also only focus on continuous time dynamic system and stability in the Lyapunov sense, while discrete systems and other stability definitions are also of interest in many cases. Our work does not consider the case in discrete systems and stability definitions other than the Lyapunov stability.

It should also be noted that \ourarch is specifically designed for controller synthesis task with a single desired equilibrium. For out-of-scope tasks such as finding stability certificate of existing dynamic systems \cite{pmlr-v155-boffi21a}, \ourarch does not have clear advantages over existing architectures, and may even be less expressive if the intention is to find multiple stable equilibriums. Care should be taken when trying to apply \ourarch in any of these out-of-scope tasks.

\section{Related works}

Neural Lyapunov control methods follow a common scheme: they simultaneously learn a controller $\netu$ and its Lyapunov function $\netV$, thereby synthesizing a controller with provable stability guarantees. Several works (\cite{NLC,NEURIPS2022_bba3160c,NEURIPS2023_disc_NLC}) adopt a learner-falsifier architecture, where a falsifier actively searches the state space for points where Lyapunov stability criteria are violated. These points, termed counterexamples, are added to the training set for learning $\netu$ and $\netV$. The learner then minimizes the empirical risk of Lyapunov stability criteria violation on this training set by using gradient descent to update $\netu$ and $\netV$. When no counterexample can be found, the synthesized controller $\netu$ is considered stable, and $\netV$ is a valid Lyapunov function certifying its stability.
Among these methods, \cite{NLC,NEURIPS2022_bba3160c} formulate the counterexample generation problem as a \emph{satisfiability modulo theories} (SMT) \cite{barrett2018satisfiability} problem and solve it using numerical SMT solvers.
In contrast, \cite{NEURIPS2023_disc_NLC} propose a \emph{projected gradient descent} (PGD) algorithm to approximately find counterexamples, improving the training efficiency.

Some other methods, such as \cite{Wei2023NeuralLC,mehrjou2020neural}, deviate from this learner-falsifier architecture. They iteratively evaluate the region of attraction (RoA) of $\netu$ during training and construct training sets that cover the current RoA and a surrounding region, thereby maintaining a contiguous RoA and gradually expanding it. Among these, \cite{mehrjou2020neural} determines the RoA solely by checking the Lyapunov stability criteria with the current $\netu$ and $\netV$, whereas \cite{Wei2023NeuralLC} additionally collects numerically simulated state trajectories to assess the stability of initial states more reliably. Most of these methods target continuous-time dynamical systems, while \cite{NEURIPS2023_disc_NLC} apply neural Lyapunov control to discrete-time systems. \cite{rnc-nyu-2025} propose a different approach that trains $\netu$ and $\netV$ on a finite grid over the state space without constructing more sophisticated training sets, and report empirical success. \cite{Wei2023NeuralLC,rnc-nyu-2025,lnet} use neural networks that are positive definite by construction to model $\netV$, eliminating the need for an additional loss term that enforces positive definiteness. To the best of our knowledge, \ourarch is the first architecture that structurally enforces the single-critical-point property for Lyapunov function learning.

\section{Conclusion}

In this work, we analyzed the cause of a family of training-time difficulties that may present in existing neural Lyapunov control methods, identifying their root cause as lack of neural network architectural guarantees, and propose a new network architecture, \ourarch, that addresses this issue by construction. We derived theoretical guarantees that \ourarch possess the desired mathematical properties by construction, and can be applied to all asymptotically stable equilibrium of locally Lipschitz dynamic systems. We then verified the effectiveness of \ourarch with numerical experiments, demonstrating that \ourarch indeed possesses the desired properties in practice, and can circumvent certain difficulties encountered in the training process of existing neural Lyapunov control method, as we have analyzed. We also noted on the theoretical assumptions we made. Future works may seek to generalize \ourarch to be provably effective in a wider range of scenarios.

\bibliographystyle{abbrv}
\bibliography{references.bib}

\appendix

\section{Proof of \Cref{theorem:all-can-be-repr}}
\label{apdx:proof_all_properv_can_be_written_in_gnf}

\begin{proof}
The proof goes by first showing that we can smoothly identify different regular level sets (i.e., level sets that do not contain the critical point 0) of $\Vtar$, and construct a diffeomorphism $\Phi : L_{c} \times (-c, \infty) \to \statespace\setminus\{0\}$. We then smoothly map a regular level set to the standard sphere using Morse lemma, thus obtaining $\fnwarp_0$ such that $\Vtar(x) = \|\fnwarp_{0}(x)\|^{2},\forall x\in\statespace\setminus\{0\}$. Finally we extend $\fnwarp_0$ smoothly to 0 to obtain a suitable $\fnwarp$.

\subsection{Gradient flow and a global product structure}
\label{prfsec:global_product_struct}

Since \(\Vtar\) is \(\continuous{\infty}\) and has no critical points in \(\statespace\setminus\{0\}\), its gradient \(\nabla\Vtar\) is also \(\continuous{\infty}\) and never zero on this set.
Hence the \emph{normalised gradient vector field}
\begin{equation}
    X_{\mathrm{grad}}(x) = \frac{\nabla \Vtar(x)}{\|\nabla \Vtar(x)\|^{2}} , \qquad x \in \statespace\setminus\{0\},
\end{equation}
is well defined and of class \(\continuous{\infty}\).
Let \(\phi_{t}(x)\) denote the \emph{flow} of \(X_{\mathrm{grad}}\), i.e.\ the unique maximal solution of the ordinary differential equation \(\frac{d}{dt}\phi_{t}(x) = X_{\mathrm{grad}}(\phi_{t}(x))\) with \(\phi_{0}(x)=x\). For each $x$, the flow is a function of $t$ defined on a maximal open interval
\((T_{-}(x),\, T_{+}(x))\) containing \(0\) (by the \emph{Fundamental Theorem on Flows} in \cite{Lee2012}).
If \(T_{+}(x)<\infty\), then the trajectory must leave every compact subset of \(\statespace\) as \(t\to T_{+}(x)\) (by the \emph{Escape Lemma} in \cite{Lee2012}).

Now compute the derivative of \(\Vtar\) along the flow:
\begin{equation}
    \dfdt{}\,\Vtar(\phi_{t}(x)) = \bigl\langle \nabla \Vtar(\phi_{t}(x)),\, \dot\phi_{t}(x) \bigr\rangle
    = \bigl\langle \nabla \Vtar(\phi_{t}(x)),\, \frac{\nabla \Vtar(\phi_{t}(x))}{\|\nabla \Vtar(\phi_{t}(x))\|^{2}} \bigr\rangle = 1.
\end{equation}
Therefore, \(\Vtar(\phi_{t}(x)) = \Vtar(x) + t\) as long as the flow is defined.
Because \(\Vtar\) is proper on \(\statespace\) and vanishes only at the origin,
by definition of properness, the sublevel set \(S_{c} = \{x\in\statespace : \Vtar(x) \le c\}\) is compact for every \(c\ge0\).

For any \(T>0\) and any starting point \(x\neq0\), the forward segment
\(\{\phi_{t}(x) : 0 \le t \le T\}\) is contained in the compact sublevel set \(S_{\Vtar(x)+T}\),
which lies inside \(\statespace\setminus\{0\}\) because \(\Vtar\ge \Vtar(x)>0\) there.
The Escape Lemma then implies that the flow cannot grow to infinity in finite forward time, hence
\(T_{+}(x)=+\infty\).
Similarly, moving backwards, \(\Vtar\) decreases linearly until it reaches \(0\). The trajectory approaches the origin but never escapes \(\statespace\) because \(\Vtar\) is proper. Consequently the lower endpoint is exactly \(T_{-}(x) = -\Vtar(x)\), and the flow exists for all
\(t \in (-\Vtar(x),\, +\infty)\).

For a regular value \(c>0\), consider the level set
\(L_{c} = \Vtar^{-1}(c)\), which is a smooth \((m-1)\)-dimensional manifold (by the \emph{Regular Level Set Theorem} in \cite{Lee2012}).
The map
\begin{equation}
\label{eq:apdx_global_prod_struct}
    \Phi : L_{c} \times (-c, \infty) \to \statespace\setminus\{0\},\qquad
    \Phi(p, t) = \phi_{t}(p), \qquad \forall p\in L_{c}, t\in(-c, \infty),
\end{equation}
is a bijection, because every \(x\neq0\) can be written uniquely as
\(\phi_{t}(p)\) with \(p = \phi_{-(\Vtar(x)-c)}(x) \in L_{c}\) and \(t = \Vtar(x)-c\).
The inverse is
\(x \mapsto \bigl( \phi_{-(\Vtar(x)-c)}(x), \Vtar(x)-c \bigr)\).
Both \(\Phi\) and its inverse are \(\continuous{\infty}\) (the flow of a \(\continuous{\infty}\) vector field is \(\continuous{\infty}\)), so \(\Phi\) is a \(\continuous{\infty}\) diffeomorphism.
Thus we obtain a global product structure: every level set of \(\Vtar\) is diffeomorphic to \(L_{c}\), and the punctured domain \(\statespace\setminus\{0\}\) is diffeomorphic to \(L_{c}\times (0,\infty)\) (with diffeomorphism $\Phi$).

\subsection{Morse coordinates near the origin}

\begin{lemma}[Morse Lemma \cite{Milnor1963}]
\label{lemma:morse_lemma_2.2}
    Let $p$ be a non-degenerate critical point for $f$ (where $f$ is a smooth real-valued function), then there is a local coordinate system $(y^1,\cdots,y^n)$ in a neighbourhood $\neighbourhood$ of $p$ with $y^i(p) = 0$ for all $i$ and such that the identity $f = f(p) - (y^1)^2 - \cdots - (y^\lambda)^2 + (y^{\lambda + 1})^2 + \cdots + (y^n)^2$ holds throughout $\neighbourhood$, where $\lambda$ is the index of $f$ at $p$.
\end{lemma}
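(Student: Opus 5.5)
The statement is the classical Morse Lemma. I would prove it by reducing $f$ to a quadratic form whose coefficients are smooth functions, and then diagonalising that form one coordinate at a time, in the spirit of Gram--Schmidt/Lagrange.

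\emph{Step 1 (Hadamard's lemma).} Translate so that $p=0$ and replace $f$ by $f-f(p)$. On a convex (star-shaped) neighbourhood of $0$, use
\[
f(x)=\int_0^1 \frac{d}{dt}f(tx)\,dt=\sum_i x_i g_i(x),\qquad g_i(x)=\int_0^1 \partial_i f(tx)\,dt .
\]
Since $0$ is critical, $g_i(0)=\partial_i f(0)=0$. Applying the same identity to each $g_i$ gives
\[
f(x)=\sum_{i,j}x_ix_j h_{ij}(x)
\]
with smooth $h_{ij}$. After symmetrising ($h_{ij}\leftarrow\tfrac12(h_{ij}+h_{ji})$) we have $h_{ij}(0)=\tfrac12\,\partial_{ij}f(0)$. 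Non-degeneracy then means the matrix $H(0)=(h_{ij}(0))$ is invertible.

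\emph{Step 2 (inductive diagonalisation).} Suppose that on some neighbourhood of $0$ there are coordinates $u_1,\dots,u_n$ with
\[
f=\pm u_1^2\pm\cdots\pm u_{r-1}^2+\sum_{i,j\ge r}u_iu_j H_{ij}(u),
\]
where $(H_{ij}(0))_{i,j\ge r}$ is symmetric and invertible.
\begin{itemize}
\item After a linear change of the last $n-r+1$ coordinates we may assume $H_{rr}(0)\neq0$. If all diagonal entries vanish, replace $u_r$ by $u_r+u_s$ for a suitable $s$ with $H_{rs}(0)\ne 0$.
\item Set $g=\sqrt{|H_{rr}(u)|}$, which is smooth and nonvanishing near $0$, and define
\[
v_r=g\Bigl(u_r+\sum_{i>r}u_iH_{ir}/H_{rr}\Bigr),\qquad v_i=u_i\ \ (i\ne r).
\]
\item Completing the square gives $f=\pm v_1^2\pm\cdots\pm v_r^2+\sum_{i,j>r}v_iv_jH'_{ij}(v)$, with $(H'_{ij}(0))_{i,j>r}$ still symmetric and invertible.
\item The Jacobian of $u\mapsto v$ at $0$ is invertible: it is triangular with diagonal entry $\partial v_r/\partial u_r(0)=g(0)\ne0$. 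So by the inverse function theorem $v$ is a local coordinate system on a possibly smaller neighbourhood.
\end{itemize}
Run the induction from $r=1$ to $r=n$, shrinking the neighbourhood finitely many times.

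\emph{Step 3 (the index).} Reorder the coordinates so the negative squares come first. The number of minus signs is $\lambda$ by Sylvester's law of inertia: the Hessian at $0$ in the final coordinates is $\mathrm{diag}(\pm2)$, which is congruent to $D^2f(p)$, so it has the same number of negative eigenvalues.

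The main obstacle is the bookkeeping in Step 2. One must check that the remaining block stays invertible at $0$, which follows because congruence preserves rank. One must also handle the case where every diagonal entry $H_{rr}(0)$ vanishes, which the preliminary linear change above takes care of. Everything else is routine smoothness of integrals depending on parameters and repeated use of the inverse function theorem.
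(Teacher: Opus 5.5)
Your proof is correct and matches the paper's source: the paper gives no argument of its own but cites Milnor's Lemma 2.2, whose proof is exactly Hadamard's lemma, symmetrisation of $h_{ij}$, and inductive completion of squares via $v_r=g\,(u_r+\sum_{i>r}u_iH_{ir}/H_{rr})$ with the inverse function theorem, with the index pinned down by the same Sylvester-type argument as your Step 3. The only slip is the parenthetical in Step 2: substituting $u_r\to u_r+u_s$ puts the nonzero entry $2H_{rs}(0)$ in the $(s,s)$ slot rather than $(r,r)$, so you want $u_s\to u_s+u_r$ instead, or a swap of $r$ and $s$ afterwards; this is a harmless fix.
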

\begin{corollary}
\label{coro:morse_lemma_2.2}
    There exists an open neighbourhood $\neighbourhood_{0} \subset \statespace$ of \(0\) and a \(\continuous{\infty}\) diffeomorphism
    \begin{equation}
        \varphi : \neighbourhood_{0} \to B_{\delta}(0),\qquad B_{\delta}(0)=\{x|\|x\|<\delta\},
    \end{equation}
    such that \(\Vtar(x) = \|\varphi(x)\|^{2}\) for all \(x \in \neighbourhood_{0}\). And \(\neighbourhood_{0}\) can be chosen such that \(\neighbourhood_{0} \subset \statespace\).
\end{corollary}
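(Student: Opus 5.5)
We apply \Cref{lemma:morse_lemma_2.2} at the critical point $p=0$ of $\Vtar$, use positive definiteness to pin down the index, and then shrink the Morse chart so that its image is exactly a round ball.

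\textbf{Step 1: the index is zero.} By \Cref{def:properV}, $\Vtar(0)=0$ and $\Vtar(x)>0$ for $x\neq0$, so $0$ is a strict local minimum. Hence the Hessian $D^2\Vtar(0)$ is positive semidefinite. By the hypothesis of \Cref{theorem:all-can-be-repr} it is also invertible, so it is positive definite and the index is $\lambda=0$. (Alternatively, if $\lambda>0$, moving along the $y^1$ axis in Morse coordinates would give negative values of $\Vtar$, which is a contradiction.)

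\textbf{Step 2: apply the Morse lemma.} \Cref{lemma:morse_lemma_2.2} gives an open neighbourhood $\neighbourhood$ of $0$ and a chart $y=(y^1,\dots,y^m):\neighbourhood\to y(\neighbourhood)\subset\Real^m$ with $y(0)=0$. This chart is a $\continuous{\infty}$ diffeomorphism onto an open set and satisfies $\Vtar=\|y\|^2$ on $\neighbourhood$.

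\textbf{Step 3: shrink to a ball.} Since $\statespace$ contains a neighbourhood of $0$, we may replace $\neighbourhood$ by $\neighbourhood\cap\operatorname{int}\statespace$ and restrict $y$ to it, so that $\neighbourhood\subset\statespace$. The set $y(\neighbourhood)$ is open and contains $0$, so some $\delta>0$ satisfies $B_\delta(0)\subset y(\neighbourhood)$. Set
\[
\neighbourhood_0:=y^{-1}(B_\delta(0)),\qquad \varphi:=y|_{\neighbourhood_0}.
\]
Then $\neighbourhood_0$ is open, contains $0$, and lies in $\statespace$. The map $\varphi:\neighbourhood_0\to B_\delta(0)$ is a bijective restriction of a diffeomorphism onto an open subset, hence itself a $\continuous{\infty}$ diffeomorphism. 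It satisfies $\Vtar(x)=\|\varphi(x)\|^2$ on $\neighbourhood_0$.

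\textbf{Main obstacle.} The only substantive point is Step 1: ruling out a nonzero index, which is what turns the Morse normal form into a sum of squares. The argument there combines the second-order necessary condition at a minimum with the non-degeneracy assumption. The rest is routine point-set bookkeeping, mainly making sure the image is exactly $B_\delta(0)$ rather than merely containing it.
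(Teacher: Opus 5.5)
Your proof is correct and follows the same route as the paper: apply the Morse lemma at $0$ with index $\lambda=0$, intersect the chart domain with $\statespace$, and take $\varphi$ to be the coordinate map. Your version is tighter than the paper's in two places. You justify $\lambda=0$ from positive definiteness together with invertibility of the Hessian, whereas the paper asserts it from non-degeneracy alone; and you shrink to $y^{-1}(B_\delta(0))$ so that the image is exactly $B_\delta(0)$, which the paper's proof never arranges.
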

\begin{proof}
    Since 0 is a non-degenerate critical point, its index $\lambda$ is 0. Let $p=0$ and $\lambda=0$, by applying \Cref{lemma:morse_lemma_2.2}, we can conclude that there exists a coordinate system $(y^1, \cdots, y^n)$ in a neighbourhood $\neighbourhood$ of 0, such that $f=(y^1)^2 + \cdots + (y^n)^2$ holds throughout $\neighbourhood$.

    We take $\neighbourhood_{0}$ as the intersection $\neighbourhood_{0}=\neighbourhood \cap \statespace$. By definition of open sets, and consider the fact that $0\in\neighbourhood,0\in\statespace\implies0\in\neighbourhood_{0}$, $\neighbourhood_{0}$ is also an open neighbourhood of $p=0$. Thus, we have obtained $\neighbourhood_{0}$ and a coordinate system $(y^1, \cdots, y^n)$, such that the identity $f=(y^1)^2 + \cdots + (y^n)^2$ holds throughout $\neighbourhood_{0}$.

    By taking $\varphi(x)$ to be the coordinate transform, i.e., $\varphi(x)=(y^1, \cdots, y^n)^\intercal$, we have $\Vtar(x)=\|\varphi(x)\|^2$.

    Moreover, by theorem 3.1.1 (Morse Lemma) given in \cite{Nirenberg2001}, there exists such a $\varphi(x)$ that is also $\continuous{\infty}$ smooth. This completes the proof of \Cref{coro:morse_lemma_2.2}.
\end{proof}

\begin{remark}
    Note that \Cref{coro:morse_lemma_2.2} only shows that there exists some $\neighbourhood_{0}\in\statespace$ where $\Vtar(x)=\|\varphi(x)\|^2$ holds, hence letting $\fnwarp=\varphi$ does not complete our proof (which demands $\Vtar(x)=\fnarch{x},\forall x\in\statespace$).
\end{remark}

\subsection{Smoothly map a single level set to a sphere}

Recall that, by \cref{coro:morse_lemma_2.2}, we have $\Vtar(x)=\|\varphi(x)\|^2,\forall x\in\neighbourhood_{0}$, and $\neighbourhood_{0}$ is smoothly deformed by $\varphi$ into an open ball $B_{\delta}(0)$ of radius $\delta$. In other words, $\Vtar(x)<\delta^2,\forall x\in\neighbourhood_{0}$.

Choose \(a = \delta^{2}/4\).  Then \(L_{a} = \Vtar^{-1}(a)\) is contained in the Morse neighbourhood \(\neighbourhood_{0}\).
The diffeomorphism \(\varphi\) maps \(L_{a}\) onto the sphere \(\partial B_{\sqrt{a}}(0)\), consequently
\begin{equation}
    h : L_{a} \to S^{m-1},\qquad h(x) = \frac{\varphi(x)}{\|\varphi(x)\|}
\end{equation}
is a \(\continuous{\infty}\) diffeomorphism onto the unit sphere $S^{m-1}$.

\subsection{Global construction of $\fnwarp$}

Using the product structure \cref{eq:apdx_global_prod_struct} with \(c = a\), define
\begin{subequations}
    \begin{align}
        & F : \statespace\setminus\{0\} \to S^{m-1} \times (0,\infty), \\
        & F(x) = \Bigl( h\bigl( \Phi(x,-(\Vtar(x)-a)) \bigr), \Vtar(x) \Bigr) = \Bigl( h\bigl( \phi_{-(\Vtar(x)-a)}(x) \bigr), \Vtar(x) \Bigr).
    \end{align}
\end{subequations}
Note that \(\phi_{-(\Vtar(x)-a)}(x) \in L_{a}\) because
\(\Vtar(\phi_{-(\Vtar(x)-a)}(x)) = \Vtar(x) - (\Vtar(x)-a) = a\).
The map \(F\) is the composition of the diffeomorphism \(\Phi\) (with the second parameter re‑expressed) and the diffeomorphism \(h\) on the first factor, hence \(F\) is a \(\continuous{\infty}\) diffeomorphism.

\begin{remark}
    In other words, we use $\phi_t$ to diffeomorphically map every level set $L_c$ to $L_a$ via $\phi_{a-c}$, and then smoothly map $L_a$ onto the unit sphere $S^{m-1}$ with $\varphi$ obtained from the Morse coordinates. We also use $\Vtar(x)$ as coordinate along an extra dimension in the product $S^{m-1}\times(0,\infty)$, so that we have a diffeomorphism from $\statespace\setminus\{0\}$ to the product $S^{m-1}\times(0,\infty)$.
\end{remark}

Now compose \(F\) with map \(P : S^{m-1}\times (0,\infty) \to \Real^{m}\setminus\{0\}, P(u,r) = \sqrt{r}\,u\):
\begin{equation}
    \fnwarp_{0}(x) = P(F(x)) = \sqrt{\Vtar(x)} h\bigl( \phi_{-(\Vtar(x)-a)}(x) \bigr), \qquad \forall x \neq 0.
\end{equation}
Set \(\fnwarp_{0}(0)=0\).  For \(x \neq 0\) we immediately obtain
\begin{equation}
    \|\fnwarp_{0}(x)\| = \sqrt{\Vtar(x)} \quad\Longrightarrow\quad \Vtar(x) = \|\fnwarp_{0}(x)\|^{2}, \qquad \forall x \neq 0.
\end{equation}
Note that this is $\continuous{\infty}$ on $\statespace\setminus\{0\}$.

\subsection{Extension to the origin}

To obtain a global diffeomorphism (that additionally includes $x=0$), we modify the normalized gradient field $X(x)$ near the origin so that its flow coincides with the radial flow coming from \(\varphi\).  Choose numbers \(r_1,r_2\) with \(\delta/2<r_1<r_2<\delta\) and a smooth function \(\eta:[0,\infty)\to[0,1]\) such that
\begin{equation}
    \eta(s)=0 \;\; (s\le r_1), \qquad \eta(s)=1 \;\; (s\ge r_2).
\end{equation}
On \(\neighbourhood_0\setminus\{0\}\), define:
\begin{equation}
    X_{\mathrm{rad}}(x) = (D\varphi(x))^{-1}\varphi(x),
\end{equation}
By the chain rule,
\begin{equation}
    \langle \nabla V(x),\, X_{\mathrm{rad}}(x) \rangle
   = \bigl\langle \nabla (\|\varphi(\cdot)\|^{2})(x),\; (D\varphi(x))^{-1}\varphi(x) \bigr\rangle
   = 2\|\varphi(x)\|^{2} = 2V(x).
\end{equation}
Normalize it to
\(Y_{\mathrm{rad}}(x) = X_{\mathrm{rad}}(x)/(2V(x))\), then \(\langle\nabla\Vtar(x), Y_{\mathrm{rad}}\rangle=1\) and \(Y_{\mathrm{rad}}\) is \(\continuous{\infty}\) on $\neighbourhood_{0}\setminus\{0\}$.
Now we define the blended vector field as:
\begin{equation}
    \widetilde{X}(x) =
    \begin{cases}
    \bigl(1-\eta(\|\varphi(x)\|)\bigr)\,Y_{\mathrm{rad}}(x) + \eta(\|\varphi(x)\|)\,X_{\mathrm{grad}}(x), & x\in\neighbourhood_0\setminus\{0\},\\[4pt]
    X_{\mathrm{grad}}(x), & x\in\statespace\setminus\neighbourhood_0 .
    \end{cases}
\end{equation}
Since $\langle\nabla\Vtar(x), Y_{\mathrm{rad}}\rangle=\langle\nabla\Vtar(x), X_{\mathrm{grad}}\rangle=1$, and $\widetilde{X}$ is a linear blend of $Y_{\mathrm{rad}}$ and $X_{\mathrm{grad}}$, we have:
\begin{equation}
    \langle \nabla V(x), \widetilde{X}(x) \rangle = 1, \qquad \forall x\neq 0.
\end{equation}
\(\widetilde{X}\) is \(\continuous{\infty}\) on \(\statespace\setminus\{0\}\) and coincides with \(Y_{\mathrm{rad}}\) on the region
\(\{0<\|\varphi(x)\|\le r_1\}\).

Let \(\widetilde\phi_t\) be the flow of \(\widetilde{X}\).  By the same properness argument as in \Cref{prfsec:global_product_struct},
\(\widetilde\phi_t(x)\) exists for all \(t\in(-V(x),\infty)\) and stays inside \(\statespace\setminus\{0\}\), and $\Vtar(\widetilde\phi_t(x)) = \Vtar(x) + t$.
Because \(\widetilde{X}=Y_{\mathrm{rad}}\) near the origin, the flow there is exactly the radial flow of the Morse chart $\varphi$. In other words, for
\(0<\|\varphi(x)\|\le r_1\), let \(y(t) = \varphi(\widetilde\phi_t(x))\).  Using \(\dot{\widetilde\phi}_t = \widetilde X(\widetilde\phi_t)\)(by definition) and the equality \(\widetilde X = Y_{\text{rad}}\) on the region in question, we obtain
\begin{equation}
\label{eq:dotyt_diffeq}
    \dot y(t) = \frac{y(t)}{2\|y(t)\|^2}, \qquad y(0) = \varphi(x).
\end{equation}
The solution to \cref{eq:dotyt_diffeq} is:
\begin{equation}
    y(t) = \frac{\sqrt{\|\varphi(x)\|^2 + t}}{\|\varphi(x)\|}\, \varphi(x).
\end{equation}
Setting \(t = a - V(x)\) and using \(V(x) = \|\varphi(x)\|^2\) gives
\begin{equation}
    y(a-V(x)) = \frac{\sqrt{a}}{\|\varphi(x)\|}\, \varphi(x) = \sqrt{a}\, \frac{\varphi(x)}{\|\varphi(x)\|},
\end{equation}
hence
\begin{equation}
\label{eq:ext_flow_expr_near_origin}
    \widetilde\phi_{a-V(x)}(x) = \varphi^{-1}\!\left( \sqrt{a}\, \frac{\varphi(x)}{\|\varphi(x)\|} \right).
\end{equation}

Now repeat the construction of \(\Psi_0\) using the new flow.  Let
\begin{equation}
\label{eq:final_construction}
    \Psi(x) = \sqrt{V(x)}\; h\bigl( \widetilde\phi_{a-V(x)}(x) \bigr), \qquad \forall x\neq0,\; \Psi(0)=0 .
\end{equation}
For \(x\) with \(\|\varphi(x)\| \le r_1\), by \cref{eq:ext_flow_expr_near_origin}, we have:
\begin{equation}
    \Psi(x) = \|\varphi(x)\|\; \frac{\varphi(x)}{\|\varphi(x)\|} = \varphi(x), \qquad \forall x\in\{x\in\statespace|V(x)\le a\}.
\end{equation}
Thus \(\Psi\) coincides with the \(\continuous{\infty}\) diffeomorphism \(\varphi\) on the whole ball
\(\{x:\|\varphi(x)\| < r_1\}\).  In particular, \(\Psi\) is \(\continuous{\infty}\) on that neighbourhood, \(\Psi(0)=0\), and
\(D\Psi(0)=D\varphi(0)\) is invertible.

Outside that neighbourhood, \(\Psi\) is the composition of the flow diffeomorphism \(\widetilde\Phi(p,t)=\widetilde\phi_t(p)\),
the diffeomorphism \(h\), and the polar map \(P(u,r)=\sqrt{r}\,u\); hence it is also a
\(\continuous{\infty}\) diffeomorphism. Additionally, by construction in \cref{eq:final_construction}, we have:
\begin{equation}
    \|\Psi(x)\|^2 = V(x), \qquad \forall\,x\in\statespace .
\end{equation}

This completes the proof.
\end{proof}

\section{Computation resource of function fitting experiment}
\label{apdx:compres}

This experiment is conducted on a laptop with Intel i9-13900HX CPU, 32G RAM and NVIDIA RTX4060 GPU (8GB VRAM). Each individual case (out of 16) displayed in \Cref{fig:exp:fit_proper} takes around 1 minute to run. A discrete GPU with at least 2GB VRAM is recommended for running this experiment.

The manually designed scalar fields used in this experiments are saved as 2D images with 1024 pixels in both height and width, and are provided in the supplemental material alongside the code.

\section{Technical details of controller synthesis experiment}
\label{apdx:tech_detail_of_exp}

To ensure correctness of the implementation, we implement the experiment by adapting code publicly released by Wei et al. (\url{https://github.com/shiqingw/Neural-Lyapunov-Uncertainties}, \texttt{eg1\_inverted\_pendulum\_2d/inv\_pend\_2d\_sum4\_nn\_controller.py}). A full list of all relevent hyperparameters is given in \Cref{tab:hyperparam_exp2023}.

\begin{table}[!ht]
  \caption{Hyperparameters of controller synthesis experiment. Hyperparameters not listed here are kept the same as in the inverted pendulum experiment in \cite{Wei2023NeuralLC} (this includes the learning rate and batch size).}
  \label{tab:hyperparam_exp2023}
  \centering
  \begin{tabular}{ccc}
    \toprule
        Name & Value & Remarks \\
    \midrule
        roa\_adaptive\_level\_multiplier & False & Whether to decrease RoA level multiplier over time \\
        cutoff\_radius & 0.1 & Radius of initial safe set \\
    \bottomrule
  \end{tabular}
\end{table}

Note that, the authors of \cite{Wei2023NeuralLC} applied a loose saturation function to the output of the controller only to limit its capability and to demonstrate that their method could indeed gradually expand the RoA of the equilibrium \cite{Wei2023NeuralLC}. In our experiment, we removed this function and use the output of the controller network as-is. Also, \cite{Wei2023NeuralLC} took uncertainty of the system dynamics into account, and used both an accurate "true" dynamics, and a nominal dynamics that is allowed to be slightly inaccurate in their algorithm. We consider a simplified case where the nominal system dynamics is accurate by setting the nominal system to be exactly the same as the true system.

It should also be noted that this method starts by first synthesizing an LQR \cite{Lewis2012} controller for linearization of the full nonlinear system dynamics at $x=0$. For this, we also give the linearization of the system described in \cref{eq:exp_sys} as:
\begin{equation}
    \begin{cases}
        \dot{x_1} = x_2 \\
        \\
        \dot{x_2} = \dfrac{\pi}{2}x_1 + 10u_1 - \dfrac{1}{2}u_2 .
    \end{cases}
\end{equation}
Note that this method is also effective for uncertain nonlinear systems, this linearization is only used as part of the method, see \cite{Wei2023NeuralLC} for further details.

This experiment is performed on a laptop with Intel i9-13900HX CPU, 32G RAM for around 40 minutes. This experiment is performed on CPU without using CUDA.

\section{Controller synthesis experiment with an alternative method}
\label{apdx:exp2025}

This experiment follows the same design as in \Cref{sec:exp:controller_synthesis}, but with another neural Lyapunov control method proposed in \cite{rnc-nyu-2025}. The dynamic system being considered is:
\begin{equation}
\label{eq:exp_sys_2025}
    \begin{cases}
        \dot{x}_1 = x_2 \\
        \\
        \dot{x}_2 = \dfrac{1}{2} \sin(\pi x_1) + (10\exp(-2x_1^2-2x_2^2)-5)u_1 + 2u_2
    \end{cases},
\end{equation}
where $x=(x_1\;x_2)^\intercal,x\in\statespace=(-1,+1)^2$ is the system state, $u=(u_1\;u_2)^\intercal\in(-5,+5)^2$ is the input.

To ensure correctness, we implement this experiment by adapting code publicly released by Wei et al. (\url{https://github.com/shiqingw/Robust-Lyapunov-Prob}, MIT License). A full list of all relevent hyperparameters is given in \Cref{tab:hyperparam_exp2025}. This experiment is conducted on a laptop with Intel i9-13900HX CPU, 32G RAM and NVIDIA RTX4060 GPU (8GB VRAM) for 200 epochs, with 10000 samples in each epoch, following the practice of the authors. Each run takes around 5 minutes. The results are shown in \Cref{fig:exp_fin_all_2025}.

\begin{figure}[!ht]
    \centering
    \includegraphics[width=1\linewidth]{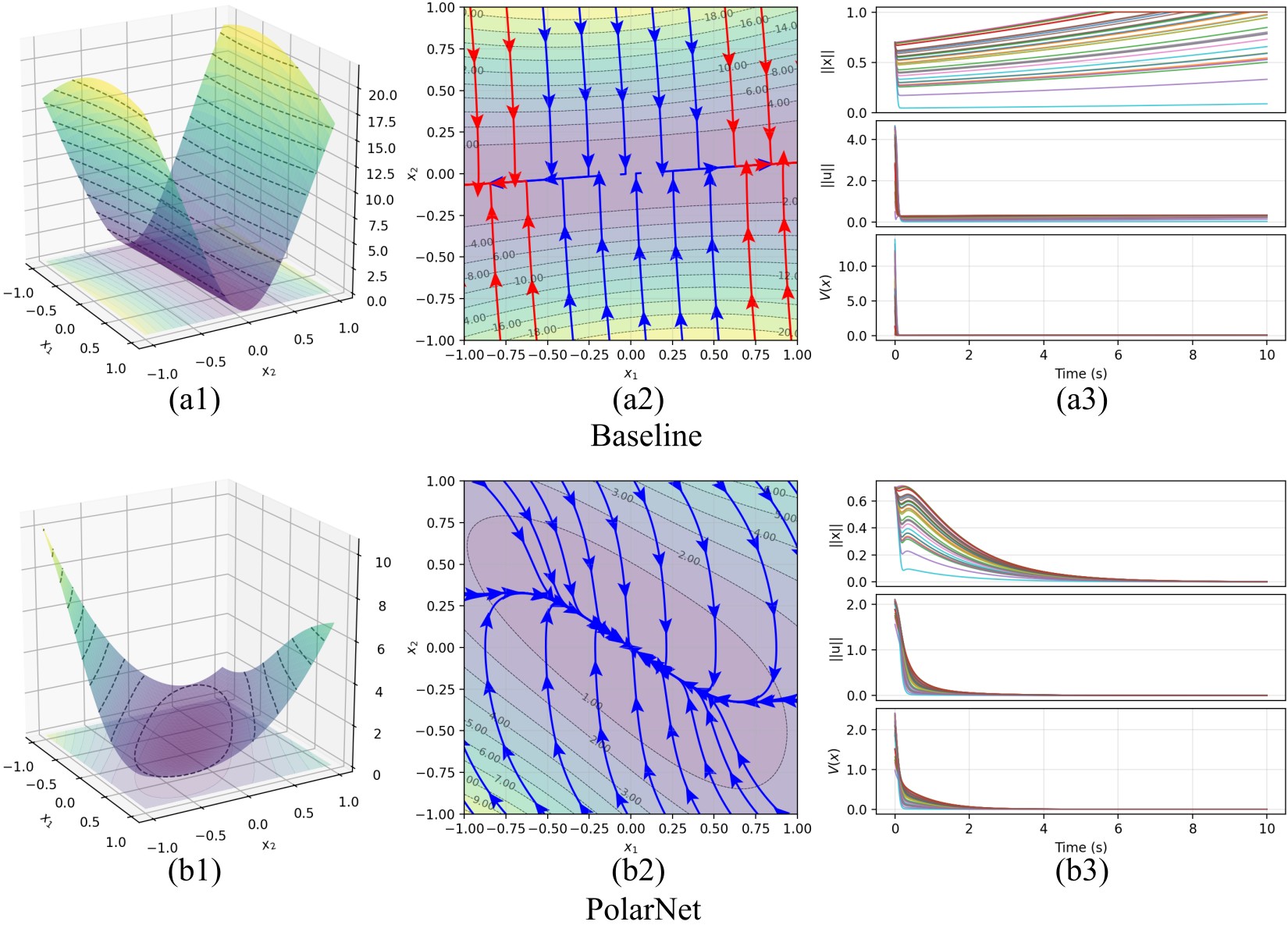}
    \caption{Controller synthesis result with different network architectures: (a1), (b1): Lyapunov function. (a2), (b2): Phase portrait, unstable trajectories are marked in red. (a3), (b3): 36 numerically simulated trajectories. (a): Using the network architecture as in \cite{Wei2023NeuralLC}. (b): Using \ourarch. $\Vtar(x)$ and $\|x\|$ trajectories are collected by uniformly sample $x_0$ on a circle of 0.7 radius centering at the origin. Notice how states evolve along descending directions of $\netV$ for $x\in(-1,+1)\times(-1,+1)$ in (a) but do not converge to the origin, as analyzed in \Cref{sec:limit_of_exiting_methods}.}
    \label{fig:exp_fin_all_2025}
\end{figure}

\begin{table}[!ht]
  \caption{Hyperparameters of controller synthesis experiment with alternative method \cite{rnc-nyu-2025}. Hyperparameters not listed here are kept the same as in the inverted pendulum experiment (the version without distinct nominal and true systems) in \cite{rnc-nyu-2025}.}
  \label{tab:hyperparam_exp2025}
  \centering
  \begin{tabular}{ccc}
    \toprule
        Name & Value & Remarks \\
    \midrule
        lyapunov\_lr & 5e-4 & learning rate of Lyapunov network $\netV$ \\
        lyapunov\_wd & 1e-5 & weight decay (parameter of Adam optimizer) of Lyapunov network $\netV$ \\
        controller\_lr & 2e-3 & learning rate of controller network $\netu$ \\
        controller\_wd & 1e-5 & weight decay (parameter of Adam optimizer) of controller network $\netu$ \\
        cutoff\_radius & 0.1 \\
    \bottomrule
  \end{tabular}
\end{table}

Note that we are adapting the code provided in \texttt{eg1\_inverted\_pendulum/train\_nominal.py} of the repository (\url{https://github.com/shiqingw/Robust-Lyapunov-Prob}), and our statespace $\statespace$ has values in $(-1,+1)$ in both dimensions, rather than $(-\pi,+\pi)$. We limit the controller output to $(-5,+5)$ following the practice of in \cite{rnc-nyu-2025}.

\ifx\paperversion\paperversionpreprint
\else
    \newpage
    \input{checklist.tex}
\fi

\end{document}